\documentclass[aps,prb]{revtex4}

\usepackage{amsthm}
\usepackage{amsfonts}

\newtheorem{definition}[]{Definition}
\newtheorem{proposition}[]{Proposition}
\newtheorem{remark}[]{Remark}

\begin{document}

\title{Joint moments of proper delay times}

\author{Angel M. Mart\'{\i}nez-Arg\"uello}

\author{Mois\'es Mart\'{\i}nez-Mares}
\affiliation{Departamento de F\'isica, Universidad Aut\'onoma
Metropolitana-Iztapalapa, Apartado Postal 55-534, 09340 M\'exico Distrito
Federal, Mexico}

\author{Julio C. Garc\'ia}
\affiliation{Departamento de Matem\'aticas, Universidad Aut\'onoma
Metropolitana-Iztapalapa, Apartado Postal 55-534, 09340 M\'exico Distrito
Federal, Mexico}

\date{\today}

\begin{abstract}
We calculate negative moments of the $N$-dimensional Laguerre distribution for
the orthogonal, unitary, and symplectic symmetries. These moments correspond to
those of the proper delay times, which are needed to determine the statistical
fluctuations of several transport properties through classically chaotic
cavities, like quantum dots and microwave cavities with ideal coupling. 
\end{abstract}

\pacs{73.23.-b, 73.23.Ad, 02.50.Cw, 05.60.Gg}

\maketitle

\section{Introduction}
\label{sec:intro}

The generalized Laguerre ensemble appears in the context of chaotic
scattering~\cite{Frahm1,Frahm2} as being the joint distribution of the
reciprocals of the eigenvalues of the Wigner-Smith time delay matrix, known as
the proper delay times. 

The Wigner-Smith time delay matrix was introduced by Smith~\cite{Smith} as a
multi-channel generalization of the concept of delay time suffered by a wave
packet, due to the interaction with a scattering potential, introduced by Wigner
in the one channel situation.~\cite{Wigner} It is an $N\times N$ Hermitian
matrix, where $N$ is the number of open modes (or channels), whose eigenvalues,
the proper delay times, represent individual delay times on the
channels.~\cite{Frahm1,Frahm2} This time delay matrix is defined in terms of
the energy derivative of the scattering matrix $S$, which is a fundamental
entity in the description of scattering processes, and many transport properties
in open systems, by relating the outgoing plane wave amplitudes to the incoming
ones into the $N$ channels.~\cite{Mellobook} Therefore, the Wigner-Smith time
delay matrix is very important in the quantification of the transport
properties which depend on the derivative of the scattering matrix with respect
to the energy or an external parameter. The activity in this field has been
increased due to the recent theoretical
investigations\cite{Feist2014,Ivanov2014,Deshmukh2014,Chacon2014} that emerged
from a measurement of a delay time in experiments of interaction of light with
matter, with attosecond precision.\cite{Schultze2010}

In chaotic scattering, one of the most important questions is the effect of the
chaotic classical dynamics of open ballistic cavities on the transport
properties (see~\onlinecite{MelloLesHouches,BeenakkerRMP,Alhassid,Fyodorov} and
references therein). For example, the parametric derivative of the conductance
through a quantum dot, which is the analogous to the level velocity in the
characterization of a mesoscopic
system,~\cite{Simons,Fyodorov1994,Taniguchi,Mirlin} fluctuates with respect to
some external parameters, that could be an applied magnetic
field,~\cite{MarcusB} the Fermi energy,~\cite{Keller} or the dot
shape,~\cite{Chan} when they are modified by a small
amount.~\cite{Huibers,vanLangen,Castano} The DC current pumped
through a quantum dot at zero bias is quantified by its fluctuations with
respect to an applied magnetic field.~\cite{Switkes,BrouwerPumping} The same
situation occurs with the parametric derivative of the transmission coefficient,
with respect to the frequency and cavity shape, in classical wave
cavities.\cite{Schanze,MartinezMares} The statistical analysis needed for the
quantification of a transport property, or its fluctuation, is very well
realized by the random-matrix theory, that reveals the universal aspects of the
problem.\cite{MelloLesHouches,BeenakkerRMP,Alhassid} Since all of these
quantities are defined in terms of the derivative of the scattering matrix this
theory leads to quantify the transport properties, or their fluctuations, by the
first moments of the proper delay times $\tau_i$'s
($i=1,\,\ldots,\,N$).\cite{Castano,Schanze,MartinezMares,Mucciolo} In other
cases, it is the distribution of the Wigner time delay, defined as the average
of the proper delay times, what is of interest. For example, it is related to
the dimensionless capacitance of a mesoscopic
capacitor\cite{Gopar,FyodorovPRE97} or to the thermopower;\cite{Frahm2} in
disordered systems it is used to characterize the classical diffusion in the
metallic regime,\cite{Ossipov1} the eigenfunction fluctuations (see
Ref.~\onlinecite{Ossipov2} and references therein), and the metal-insulator
transition.\cite{Kottos,Antonio}

A lot of work concerning statistical studies of delay times and Wigner time
delay has been developed in the last thirty years in several contexts of chaotic
and disordered
systems.~\cite{Frahm2,FyodorovPRL96,FyodorovJMP,FyodorovPRE97,Texier,Ossipov3,
Savin,Sommers,Marciani} For closed chaotic systems, those with non perfect
coupling to the $N$ open channels, the mean and variance of partial or
phase-shift times, being the energy derivative of the eigenphases of the
scattering matrix, as well as of the Wigner time delay, are well known in
absence of time reversal symmetry ($\beta=2$).\cite{FyodorovPRL96,FyodorovJMP}
The existing distribution  for partial delay times for arbitrary $N$ are given
in Refs.~\onlinecite{FyodorovJMP}, \onlinecite{Seba}, and \onlinecite{Savin} for
$\beta=2$, and in Ref.~\onlinecite{Savin} in the presence of time reversal
symmetry ($\beta=1$) and spin-rotation symmetry ($\beta=4$). Besides, the
distribution of the Wigner time delay was first calculated in
Refs.~\onlinecite{Gopar,FyodorovJMP} for the $N=1$ case, while the one for $N=2$
and $\beta=2$ was calculated, generalized, and verified to arbitrary $\beta$ in
the ideal coupling case, and related to that of arbitrary coupling in
Ref.~\onlinecite{Savin}. The joint distribution of the reciprocals of the proper
delay times, which is the Laguerre distribution, is also known for arbitraries
$N$ and $\beta$.~\cite{Frahm1,Frahm2,Savin} However, it does not exist enough
evaluations of the moments of this distribution for any
symmetry and $N$. Expressions for the density of proper delay times and
uncorrelated moments have been evaluated in Ref.~\onlinecite{Berkolaiko} in the
limit of very large $N$ and, for few and large number of channels and any
symmetry, in Refs.~\onlinecite{Mezzadri1,Mezzadri2} more recently, both cases
for perfect coupling.

Our purpose in the present paper is to evaluate several joint moments of proper
delay times for any symmetry, and determine their dependence with the number of
channels, when this number is arbitrary. Many of these moments have not been
calculated before, some of which have also importance on transport properties
through ballistic open systems. Therefore, we regard perfect coupling to the $N$
open channels. 

In the next section we introduce the Laguerre distribution and establish the
calculation of the moments we are interested in, in a general form, and
summarize some known results for the partial times. In Sect.~\ref{sec:III} we
present some definitions and properties that will help us to manage the Laguerre
distribution, and allow us to determine the moments for arbitrary $N$ and any
symmetry. Explicit calculations are performed in Sect.~\ref{sec:explicit}. We
conclude in Sect.~\ref{sec:conclusions}.

\section{Delay times and generalized Laguerre distribution}

\subsection{Proper delay times}

A symmetrized form of the Wigner-Smith time delay matrix~\cite{Frahm1} can be
written in dimensionless units as
\begin{equation}
Q = -\frac{\mathrm{i}\hbar}{\tau_H}\, S^{-1/2} 
\frac{\partial S}{\partial E}
S^{-1/2},
\end{equation}
where $E$ is the energy and $\tau_H$ is the Heisenberg time
($\tau_H=2\pi\hbar/\Delta$, with $\Delta$ the mean level spacing). The matrix
$Q$ is an $N\times N$ Hermitian matrix, whose eigenvalues $q_i$'s
($i=1,\,\ldots,\,N$) are the proper delay times measured in units of $\tau_H$;
that is, $q_i=\tau_i/\tau_H$. The distribution of the
proper delay times is known and it is given in terms of their reciprocals. If
$x_i=\tau_H/\tau_i$, the joint distribution of the $x_i$'s
is given by the Laguerre ensemble, namely~\cite{Frahm1}
\begin{equation}
\label{eq:Laguerre}
P_{\beta}(x_1,\ldots,x_N) = C_N^{(\beta)} 
\prod_{a<b}^N \left| x_b - x_a \right|^{\beta} 
\prod_{c=1}^N x_c^{\beta N/2}\, \mathrm{e}^{-\beta x_c/2}  ,
\end{equation}
where $\beta$ characterizes the universal statistics in Dyson's
scheme:~\cite{Dyson} $\beta=1$ in the presence of time reversal invariance
(TRI) and integral spin or half-integral spin plus rotational symmetry,
$\beta=4$ for TRI, half-integral spin and no rotational symmetry, and $\beta=2$
in the absence of TRI.~\cite{Mellobook} In Eq.~(\ref{eq:Laguerre}),
$C_N^{(\beta)}$ is a normalization constant defined through the condition
\begin{equation}
\label{eq:normalization-1}
\int_0^{\infty} 
P_{\beta}\left( x_1,\ldots,x_N \right) \, 
\mathrm{d}x_1\cdots \mathrm{d}x_N = 1. 
\end{equation}

Generalized moments of the proper delay times (in dimensionless units) like
$\left\langle q_1^{k_1}\cdots q_N^{k_N}\right\rangle^{(\beta)}$ are just the
negative
generalized moments of the Laguerre ensemble. That is, 
\begin{equation}
\label{eq:jointmoments}
\left\langle q_1^{k_1}\cdots q_N^{k_N}\right\rangle^{(\beta)} = 
\left\langle x_1^{-k_1}\cdots x_N^{-k_N}\right\rangle^{(\beta)} =
\int_0^{\infty} 
\frac{P_{\beta}\left( x_1,\ldots,x_N \right)}
{x_1^{k_1}\cdots x_N^{k_N}} \, \mathrm{d}x_1\cdots \mathrm{d}x_N, 
\end{equation}
for $k_j$ ($j=1,\ldots,N$) an integer number. 

\subsection{Partial delay times}

It is instructive to compare the distribution of the proper delay times with the
distribution of ``partial delay times'', defined as the energy derivative of
phase-shifts. The distribution of an individual partial delay time scaled with
$\tau_H$, $\tau_s$, for the $\beta=2$ symmetry, is given
by~\cite{FyodorovJMP,Seba}
\begin{equation}
\label{eq:Ps}
P_s(\tau_s) = \frac{1}{N!}\, \tau_s^{-N-2} \mathrm{e}^{-1/\tau_s}.
\end{equation}
In this case it is easy to evaluate the $k$th moment of the distribution, which
is the following:
\begin{equation}
\label{eq:k-partial}
\left\langle\tau_s^k\right\rangle = \frac{(N-k)!}{N!},
\end{equation}
for $k\leq N$. In particular
\begin{equation}
\left\langle\tau_s\right\rangle = \frac{1}{N} \quad\mbox{and}\quad 
\left\langle\tau_s^2\right\rangle = \frac{1}{(N-1)N}, 
\end{equation}
which is the result expressed in Refs.~\onlinecite{FyodorovJMP,FyodorovPRL96}.

\section{General expressions for the joint moments} 
\label{sec:III}

To obtain a feasible general expression for the joint moments we notice that 
\begin{equation}
\label{eq:Vandermonde}
\prod_{a<b\leq N} (x_b - x_a) = \det V_N =
\left| 
\begin{array}{ccccc}
1 & 1 & 1 & \cdots & 1 \\ 
x_1 & x_2 & x_3 & \cdots & x_N \\ 
\vdots & \vdots & \vdots & \cdots & \vdots \\
x_1^{N-1} & x_2^{N-1} & x_3^{N-1} & \cdots & x_N^{N-1} 
\end{array} \right| ,
\end{equation}
which is known as the $N$th-order Vandermonde determinant.~\cite{Gradshteyn} It
can be proved that~\cite{Birk}
\begin{equation}
\label{eq:detVN-1}
\det\, V_N =\sum_{\sigma} \textrm{sgn} \, \sigma 
\prod_{a=1}^{N} x_a^{-1+\sigma(a)} ,
\end{equation}
where $\sigma$ is a permutation that belongs to the \emph{symmetric group of
degree N},~\cite{Herstein}
\begin{equation} 
\label{eq:SN}
S_N = \left\{ \sigma\,: \left\{ 1,\dots,N \right\} \to 
\left\{ 1,\dots,N \right\} \,|\, \sigma\, \mbox{is a permutation} \right\} ,
\end{equation}
with $\textrm{sgn}\,\sigma$ the signature of $\sigma$; the identity permutation
is $\iota(a)=a$ for $a=1,\dots,N$. 

At this point is necessary to introduce some definitions and properties. 

\begin{definition}
\label{def:DeltaN1}
For $t_a\ge 0$ ($a=1,\ldots,N$), let
\begin{equation}
\label{eq:simplex-1}
\Delta_N = \left\{ \left(t_1,\ldots,t_N\right)\, :\, 
0\le t_1\le t_2\le\cdots\le t_N \right\}.
\end{equation}
\end{definition}

Therefore, the Vandermonde determinant of Eq.~(\ref{eq:Vandermonde}) becomes
positive for $(x_1,\ldots,x_N)\in\Delta_N$, so that
\begin{equation}
\label{eq:detVN-2}
\mathrm{det}\,V_N = \left| \mathrm{det}\,V_N \right| , 
\end{equation}
and the equality in Eq.~(\ref{eq:detVN-1}) can be written as
\begin{equation}
\label{eq:detVN-3}
\left| \textrm{det}\, V_N \right|^{\beta} = 
\sum_{\sigma_1,\dots,\sigma_{\beta}} 
\prod_{i=1}^{\beta} \textrm{sgn}\, \sigma_i 
\prod_{a=1}^N x_a^{-\beta+\sum_{j=1}^{\beta}\sigma_j(a)},
\end{equation}
where $\sigma_j\in S_N$ ($j=1,\ldots,\beta$).

Each permutation $\sigma\in S_N$ is associated to an unitary transformation
$\pi_{\sigma}$ in the $N$-dimensional Euclidean space:
\begin{equation}
\label{eq:pi}
\left(x_1,\ldots,x_N\right) \longrightarrow 
\pi_{\sigma} \left(x_1,\ldots,x_N \right) = 
\left[ x_{\sigma(1)},\dots,x_{\sigma(N)} \right].
\end{equation}

\begin{definition}
\label{def:DeltaN2}
For any $\eta\in S_N$ and $t_a\ge 0$ ($a=1,\ldots,N$), let 
\begin{eqnarray}
\label{eq:simplex-2}
\Delta_N^{\eta} = \pi_{\eta}^{-1} 
\left( \Delta_N \right) := \left\{ \left( t_1,\ldots,t_N\right)\,:\, 
0\le t_{\eta(1)}\le t_{\eta(2)}\le \cdots\le t_{\eta(N)} \right\}.
\end{eqnarray}
\end{definition}

We notice that $\Delta_N^{\iota}=\Delta_N$; also, for any $\eta\in S_N$
and $\left(t_1,\ldots,t_N\right)\in\Delta^{\eta}_N$, from Eq.~(\ref{eq:pi})
\begin{equation}
\label{eq:invpi}
\pi^{-1}_{\eta}\left(t_1,\ldots,t_N\right)=
\left[t_{\eta^{-1}(1)},\dots,t_{\eta^{-1}(N)}\right]\in\Delta_N
\end{equation}
and therefore
\begin{equation}
\det\, V_N\left[ \pi^{-1}_{\eta}(t_1,\ldots,t_N) \right] = 
\prod_{a<b} \left[ t_{\eta^{-1}(b)}-t_{\eta^{-1}(a)} \right] \ge 0, 
\end{equation}
where Eq.~(\ref{eq:detVN-2}) has been taken into account. Hence, 
\begin{equation} 
\label{eq:detVinvpi}
|\det\, V_N(t_1,\ldots,t_N)| = 
\det\, V_N\left[\pi^{-1}_{\eta}(t_1,\ldots,t_N)\right] .
\end{equation}
Finally, for any nonnegative measurable function $f(t_1,\ldots,t_N)$, the
Change of Variables Theorem allows us to write 
\begin{eqnarray}
\int_0^{\infty}f(t_1,\ldots,t_N) \mathrm{d}t_1\cdots \mathrm{d}t_N & = & 
\sum_{\eta\in S_N} \int_{\Delta_N^{\eta}} 
f\left( t_1,\ldots,t_N \right) 
\mathrm{d}t_1\cdots \mathrm{d}t_N 
\nonumber \\  & = &  
\sum_{\eta\in S_N} \int_{\Delta_N} f\circ \pi_{\eta}^{-1} 
\left( t_1,\ldots,t_N \right) \mathrm{d}t_1\cdots \mathrm{d}t_N.
\label{eq:CVT}
\end{eqnarray}

\subsection{Negative moments of the Laguerre distribution of $N$ variables}
\label{ssec:nmoments}

From the definition of the Vandermonde determinant (\ref{eq:Vandermonde}) and
Eq.~(\ref{eq:detVN-3}), the Laguerre distribution of Eq.~(\ref{eq:Laguerre})
can be written as
\begin{equation}
\label{eq:Laguerre-N2}
P_{\beta}(x_1,\ldots,x_N) = C^{(\beta)}_N 
\sum_{\sigma_1,\dots,\sigma_{\beta}} \, 
\prod_{i=1}^{\beta} \mathrm{sgn}\, \sigma_i 
\prod_{a=1}^N x_a^{\gamma_a} \mathrm{e}^{-\beta x_a/2},
\end{equation}
where
\begin{equation}
\label{eq:gammai}
\gamma_a := \gamma_a^{\sigma_1,\dots,\sigma_{\beta}} = 
\frac{N\beta}{2}-\beta + 
\sum_{j=1}^{\beta}\sigma_j(a) 
\quad\mbox{for}\quad a=1,\ldots,N .
\end{equation}
It is convenient to write $P_{\beta}(x_1,\ldots,x_N)$ as
\begin{equation}
\label{eq:RN}
P_{\beta}(x_1,\ldots,x_N) = C^{(\beta)}_N 
\sum_{\sigma_1,\dots,\sigma_{\beta}} \, 
\prod_{i=1}^{\beta} \textrm{sgn}\, \sigma_i
\prod_{a=1}^N \left( \frac{2}{\beta}\right)^{\gamma_a+1}\Gamma(\gamma_a+1) \,
f_{\gamma_a+1,{\beta}/2}(x_a), 
\end{equation}
where $f_{u,v}(t)$ is the probability density function of the Gamma
distribution with parameters $u$ and $v$,~\cite{Ross,Abramowitz}
\begin{equation}
\label{eq:density-Gamma}
f_{u,v}(t) = 
\frac{v^u t^{u-1} \mathrm{e}^{-ut}}{\Gamma(u)}
\end{equation}
with $\Gamma(u)$ the Gamma function.~\cite{Abramowitz} It is important to
notice that $\Gamma(\gamma_a+1)$ is well defined.

For the joint moments we are interested in, we need the auxiliary function 
\begin{equation}
\label{eq:Gdef1}
R_{\beta}(x_1,\ldots,x_N) = 
\frac{P_{\beta}(x_1,\ldots,x_N)}{x_1^{k_1}
\cdots x_N^{k_N}}, 
\end{equation}
for $k_a$ ($a=1,\ldots,N$) an integer number. Using Eq. (\ref{eq:RN}) this
function can be written as 
\begin{equation}
\label{eq:Gdef2}
R_{\beta}(x_1,\ldots,x_N) = C^{(\beta)}_N 
\sum_{\sigma_1,\dots,\sigma_{\beta}} 
\prod_{i=1}^{\beta} \textrm{sgn}\, \sigma_i 
\prod_{a=1}^N \left(\frac{2}{\beta}\right)^{\alpha_a} 
\Gamma(\alpha_a) f_{\alpha_a,\beta/2}(x_a), 
\end{equation}
where $\alpha_a=\gamma_a-k_a+1$ for $a=1,\ldots,N$, with $\gamma_a$ as in
Eq.~(\ref{eq:gammai}). Therefore, for any $\eta\in S_N$ and
$(x_1,\ldots,x_N)\in\Delta^{\eta}_N$, the properties (\ref{eq:invpi}) and
(\ref{eq:detVinvpi}) allows us to write 
\begin{equation}
R_{\beta}\circ \pi^{-1}_{\eta}(x_1,\ldots,x_N) = 
\frac{P_{\beta}(x_1,\ldots,x_N)}
{x_{\eta^{-1}(1)}^{k_1}\cdots x_{\eta^{-1}(N)}^{k_N}} = 
\frac{P_{\beta}(x_1,\ldots,x_N)}
{x_1^{k_{\eta(1)}}\cdots x_N^{k_{\eta(N)}}} .
\end{equation}
The last equality is just the definition (\ref{eq:Gdef1}), which according to
Eq.~(\ref{eq:Gdef2}), it can be written as 
\begin{equation} 
R_{\beta}\circ \pi^{-1}_{\eta}(x_1,\ldots,x_N) =
C^{(\beta)}_N \sum_{\sigma_1,\dots,\sigma_{\beta}} 
\prod_{i=1}^{\beta} \textrm{sgn}\, \sigma_i 
\prod_{a=1}^N \left(\frac{2}{\beta}\right)^{\alpha_a^{\eta}} 
\Gamma\left(\alpha_a^{\eta}\right) 
f_{\alpha_a^{\eta},\beta/2}(x_a) ,
\end{equation}
where 
\begin{equation}
\label{eq:alpha_a}
\alpha_a^{\eta}=\gamma_a-k_{\eta(a)}+1, 
\quad\mbox{for}\quad a=1,\dots,N
\end{equation}
If we integrate this equation over $\Delta_N$ and sum over $\eta$, the property 
(\ref{eq:CVT}) allows us to arrive to the desired result, namely
\begin{equation}
\label{eq:averages}
\left\langle x_1^{-k_1}\cdots x_N^{-k_N} \right\rangle^{(\beta)}  = 
C^{(\beta)}_N \sum_{\eta,\sigma_1,\dots,\sigma_{\beta}} 
F_{\theta_{\eta,N}}
\prod_{i=1}^{\beta} \mathrm{sgn}\, \sigma_i 
\prod_{a=1}^N \left(\frac{2}{\beta}\right)^{\alpha_a^{\eta}} 
\Gamma\left(\alpha_a^{\eta}\right).
\end{equation}
where $F_{\theta_{\eta,N}}$ is defined as 
\begin{equation}
\label{eq:FNtheta}
F_{\theta_{\eta,N}} = 
\int_{\Delta_N}\prod_{a=1}^{N}f_{\alpha_a^{\eta},\beta/2}(x_a)\, \mathrm{d}x_a,
\end{equation}
with 
\begin{equation}
\label{eq:theta-etaN}
\theta_{\eta,N}:= \theta_{\eta,N}^{\sigma_1,\ldots,\sigma_{\beta}}:=
\big(\alpha_1^{\eta},\dots,\alpha_N^{\eta},
\underbrace{\beta/2,
\ldots,\beta/2}_N\big).
\end{equation}

It is clear that the normalization constant $C_N^{(\beta)}$ can be obtained from
Eq.~(\ref{eq:averages}) for $k_a=0$ ($a=1,\ldots,N$), in which case the argument
in the sum of Eq.~(\ref{eq:averages}) is independent of $\eta$, such that the
sum over $\eta$ is exactly $N!$. It is given by
\begin{equation}
\label{eq:C_N}
C^{(\beta)}_N = \left[ N! \sum_{\sigma_1,\dots,\sigma_{\beta}} \, 
F_{\theta_N} \prod_{i=1}^{\beta} \mathrm{sgn}\, \sigma_i\,
\prod_{a=1}^N \left( \frac{2}{\beta}\right)^{\gamma_a+1}\Gamma(\gamma_a+1) 
\right]^{-1} ,
\end{equation}
where
\begin{equation}
\label{eq:FN}
F_{\theta_N} = \int_{\Delta_N} \prod_{a=1}^N 
f_{\gamma_a+1,\beta/2}(x_a)\, \mathrm{d}x_a .
\end{equation}

Two Remarks are worth mentioning.

\begin{remark}
\label{rem:remark0}
In Eqs.~(\ref{eq:averages}) and (\ref{eq:C_N}), it is necessary to take
into account the dependence of the parameters $\theta_N$ and $\theta_{\eta,N}$
on the permutations $\sigma_1,\ldots,\sigma_{\beta}$ but we omitted to write it
explicitly, for simplicity. 
\end{remark}

\begin{remark}
\label{rem:remark1}
Since the minimum possible value of any permutation is 1, it is easy to show
that $\gamma_a-k_a+1>0$ for $0\le k_a<\frac{N\beta}{2}+1$.
\end{remark}

This Remark ensures that the Gamma function that appears in
Eq.~(\ref{eq:averages}) is well defined. It is important to notice that the
property expressed in Remark~\ref{rem:remark1} is a general result that
restricts the values of $k_a$ ($a=1,\ldots,N$). This condition is the
most general and coincides with that of Ref.~\onlinecite{Mezzadri1} for the
particular case when $k_a=k$, for $a=1,\ldots,N$. The restriction of $k$ in
Eq.~(\ref{eq:k-partial}) appears also as a particular case. 

\subsubsection{$\beta$ an even integer number}

When $\beta$ is an even positive integer then 
$|\det V_N|^{\beta}=(\det V_N)^{\beta}$ for any
$(x_1,\ldots,x_N)\in\mathbb{R}^N$, being $\mathbb{R}^N$ the real $N$-dimensional
Euclidean space. In this case there is no need to use Eq.~(\ref{eq:CVT}) so,
in the computations of the negative moments, we obtain simpler formulas for the
joint negative moments of the delay times. 

If we integrate Eq.~(\ref{eq:Gdef2}) we have that
\begin{eqnarray}
\label{eq:calculation}
\left\langle x_1^{-k_1}\cdots x_N^{-k_N} \right\rangle^{(\beta)} & = & 
C^{(\beta)}_N \left(\frac{2}{\beta}\right)^N 
\sum_{\sigma_1,\dots,\sigma_{\beta}}
\prod_{i=1}^{\beta} \mathrm{sgn}\, \sigma_i 
\prod_{a=1}^N 
\int_0^{\infty} \left(\frac{\beta}{2}\right) 
x^{\gamma_a-k_a} \mathrm{e}^{-\beta x/2} \mathrm{d}x 
\nonumber \\ & = & 
C^{(\beta)}_N \left(\frac{2}{\beta}\right)^N 
A^{(\beta)}_N(k_1,\ldots,k_N),
\end{eqnarray}
where 
\begin{equation}
\label{eq:A(k)}
A^{(\beta)}_N(k_1,\ldots,k_N) =  
\sum_{\sigma_1,\dots,\sigma_{\beta}}
\prod_{i=1}^{\beta} \mathrm{sgn}\, \sigma_i 
\prod_{a=1}^N \left(\frac{2}{\beta}\right)^{\gamma_a-k_a} 
\Gamma(\gamma_a-k_a+1)
\end{equation}

If $k_1=\cdots=k_N=0$, then Eq.~(\ref{eq:calculation}) implies that
$C^{(\beta)}_N=(\beta/2)^N/A^{(\beta)}_N$, with
$A^{(\beta)}_N=A^{(\beta)}_N(0,\ldots,0)$.

\section{Explicit computations}
\label{sec:explicit}

We present explicit calculations with $N=1$ and 2, for $\beta=1$ and 4, and
$N=1$, 2, 3 and 4 for $\beta=2$. We make explicit, as far as possible, the
dependence on the corresponding permutations. We start our calculations with the
$\beta=1$ case since it is usually the most difficult to treat analytically.

\subsection{Explicit calculations for $\beta=1$}

For the particular case of $\beta=1$ but arbitrary $N$, Eq.~(\ref{eq:gammai}) is
written as 
\begin{equation}
\label{eq:gammai-1}
\gamma_a:= \gamma_a^{\sigma_1} = \frac{N}{2} -1 + \sigma_1(a),
\quad\mbox{for}\quad a=1,\,\ldots,\,N, 
\end{equation}
such that the moments given by Eqs.~(\ref{eq:averages}) are simplified to 
\begin{equation}
\label{eq:negativemoments-N-1}
\left\langle x_1^{-k_1}\cdots x_N^{-k_N} \right\rangle^{(1)} = 
C^{(1)}_N \sum_{\eta,\sigma_1} F_{\theta_{\eta,N}}\, \mathrm{sgn}\, \sigma_1 \,
\prod_{a=1}^N
2^{\sigma_1(a)-k_{\eta(a)}+N/2} 
\Gamma\left[\sigma_1(a)-k_{\eta(a)}+N/2\right],
\end{equation}
where $0\leq k_a<1+N/2$.

\subsubsection{The $N=1$ case}

According to Eq.~(\ref{eq:SN}), $S_1=\{\iota(1)=1\}$, that is, only the
identity belongs to the symmetric group $S_1$, such that
$\eta(1)=\sigma_1(1)=\iota(1)=1$ and $\theta_{\eta,1}=\theta_1=(3/2-k,1/2)$.
Therefore, from Eq.~(\ref{eq:negativemoments-N-1}), 
\begin{equation}
\left\langle x^{-k} \right\rangle^{(1)} = 
C^{(1)}_1 F_{\theta_1} 2^{-k+3/2}\, \Gamma\left(-k+3/2\right) = 
C^{(1)}_1 2^{-k+3/2}\, \Gamma\left(-k+3/2\right)
\end{equation}
where we have used that
\begin{equation}
F_{\theta_1} = \frac{1}{2^{3/2}\, \Gamma(3/2)} \int_0^{\infty}
t^{1/2} \mathrm{e}^{-t} \mathrm{d}t = 1.
\end{equation}

The only values that $k$ can take are 0 and 1. For $k=0$ we obtain the
normalization constant: $C_1^{(1)}=1/2^{3/2}(1/2)!$, while for $k=1$
we have  
\begin{equation}
\label{eq:x1kN1b1}
\left\langle x_1^{-k} \right\rangle^{(1)} = \left( \frac{1}{2} \right)^k 
\frac{ \left(\frac{1}{2}-k\right)!}{ \left(\frac{1}{2}\right)! }.
\end{equation}

\subsubsection{The $N=2$ case}

In this case, $S_2=\{\iota,\sigma\}$, such that $\sigma_1=\iota,\,\sigma$ with
$\iota(a)=a$ ($a=1,2$), $\sigma(1)=2$ and $\sigma(2)=1$; therefore,
$\gamma_a=\sigma_1(a)$. 
Equation Eq.~(\ref{eq:negativemoments-N-1}) for $N=2$ gives 
\begin{equation}
\left\langle x_1^{-k_1}\, x_2^{-k_2} \right\rangle^{(1)} = 
C^{(1)}_2 2^{5-k_1-k_2} \Big[ 
\left(F_{\theta_{\iota,2}^{\iota}}-F_{\theta_{\sigma,2}^{\sigma}}\right)
\Gamma(2-k_1) \Gamma(3-k_2) + 
\left(F_{\theta_{\sigma,2}^{\iota}}-F_{\theta_{\iota,2}^{\sigma}}\right)
\Gamma(3-k_1) \Gamma(2-k_2)
\Big].
\end{equation}
Here, Eq.~(\ref{eq:alpha_a}) says that
$\alpha_a^{\eta}=\sigma_1(a)-k_{\eta(a)}+1$ and Eq.~(\ref{eq:theta-etaN}) gives 
\begin{equation}
\theta_{\eta,N}^{\sigma_1} =
[\sigma_1(i)-k_{\eta(1)}+1,\sigma_1(2)-k_{\eta(2)}+1,1/2,1/2].
\end{equation}
According to Eq.~(\ref{propiedadbasica3}) of the Appendix~\ref{appendix}, we
can determine the coefficients $F_{\theta_{\eta,N}}^{\sigma_1}$ by means of the
negative binomial distribution $NB_{\alpha_2,p_2}$ with parameters $\alpha_2$
and $p_2=1/2$.~\cite{footnote} For the particular case of $N=2$
\begin{equation}
F_{\theta_{\eta,2}^{\sigma_1}} = 
1-\sum_{\ell=0}^{\sigma_1(1)-k_{\eta(1)}}
NB_{\sigma_1(2)-k_{\eta(2)}+1,1/2}(\ell) 
\end{equation}
for any $\eta$ and $\sigma_1\in S_2$ and $k_a=0,\,1$. 

For the particular case of $k_1=k_2=0$ we have that
\begin{eqnarray}
F_{\theta_2^{\iota}} & = & 1-\sum_{\ell=0}^{\iota(1)}
NB_{\iota(2)+1,1/2}(\ell)= 
1-\sum_{\ell=0}^1 NB_{3,1/2}(\ell) = \frac{11}{16}, 
\label{eq:iota2} \\
F_{\theta_2^{\sigma}} & = & 1-\sum_{\ell=0}^{\sigma(1)}
NB_{\sigma(2)+1,1/2}(\ell)= 
1-\sum_{\ell=0}^2 NB_{2,1/2}(\ell) = \frac{5}{16},
\label{eq:sigma2}
\end{eqnarray}
where we have used the Definition~(\ref{eq:NBinomial}), and the normalization
constant is $C_2^{(1)}=1/48$.

In very similar way, for $k_1=k=1$ and $k_2=0$,
\begin{equation}
\label{eq:b1N2k0}
F_{\theta_{\iota,2}^{\iota}} = \frac{7}{8},\quad 
F_{\theta_{\sigma,2}^{\iota}} = \frac{1}{2}, \quad
F_{\theta_{\iota,2}^{\sigma}} = \frac{1}{2}, \quad\mbox{and}\quad
F_{\theta_{\sigma,2}^{\sigma}} = \frac{1}{8},  
\end{equation}
such that
\begin{equation}
\label{eq:x1kN2b1}
\left\langle x_1^{-k} \right\rangle^{(1)} = \left(\frac{1}{2}\right)^k
\frac{(1-k)!}{1!}\, K_2^{(1)}(k,0), 
\end{equation}
where 
\begin{equation}
\label{eq:K2b110}
K_2^{(1)}(k,0) = 1 \quad\mbox{for}\quad k=1 .
\end{equation}

For $k_1=k_2=k=1$ we have that
\begin{equation}
\label{eq:b1N2kk}
F_{\theta_{\iota,2}^{\iota}} = \frac{3}{4}, \quad 
F_{\theta_{\sigma,2}^{\iota}} = \frac{3}{4}, \quad
F_{\theta_{\iota,2}^{\sigma}} = \frac{1}{4}, \quad\mbox{and}\quad
F_{\theta_{\sigma,2}^{\sigma}} = \frac{1}{4},
\end{equation}
and we can write the corresponding moment as 
\begin{equation}
\label{eq:x1kx2kN2b1}
\left\langle x_1^{-k}\, x_2^{-k} \right\rangle^{(1)} = 
\left(\frac{1}{2}\right)^{2k} \frac{(1-k)!}{1!}\, 
\frac{(\frac 32-k)!}{\frac 32!}. 
\end{equation}

For $\beta=1$, the calculations become much more complicated for values of $N$
larger than 2. However, some moments can be obtained numerically, some of which
are shown in Section~\ref{sub:numerics}.

\subsection{Explicit calculations for $\beta=2$}

From Eq.~(\ref{eq:gammai}) we have that $\gamma_a=N-2+\sigma_1(a)+\sigma_2(a)$
($a=1,\ldots,N$) and Eq.~(\ref{eq:A(k)}) is written as 
\begin{equation}
\label{eq:ANbeta2N}
A^{(2)}_N(k_1,\ldots,k_N) = 
\sum_{\sigma_1,\sigma_2\in S_N} 
\textrm{sgn}\,(\sigma_1\sigma_2) 
\prod_{a=1}^N [N-2+\sigma_1(a)+\sigma_2(a)-k_a]! \,;
\end{equation}
the normalization constant is $C^{(2)}_N=1/A^{(2)}_N$, with
$A_N^{(2)}=A_N^{(2)}(0,\ldots,0)$, and the moments are given by
Eq.~(\ref{eq:calculation}).

\subsubsection{The $N=1$ case}

In this special case, $\gamma_1=\sigma_1+\sigma_2-1$, such that
Eq.~(\ref{eq:ANbeta2N}) reads 
\begin{equation}
\label{eq:ANbeta2N1}
A^{(2)}_1(k) = \sum_{\sigma_1,\sigma_2\in S_1} 
\textrm{sgn}\,(\sigma_1\sigma_2) 
(\sigma_1+\sigma_2-k-1)!
= (1-k)! \, .
\end{equation}
Since $N\beta/2+1=2$, according to Remark~\ref{rem:remark1} the maximum
value for $k$ is 1. The normalization constant is obtained for $k=0$ as
$C_1^{(2)}=1/1!$. Therefore, the only moment is
\begin{equation}
\label{eq:x1kN1b2}
\left\langle x_1^{-k} \right\rangle^{(2)} = 
\frac{(1-k)!}{1!}
\quad\mbox{for}\quad k=1 .
\end{equation}

\subsubsection{The $N=2$ case}

In this case, we observe that $\gamma_a=\sigma_1(a)+\sigma_2(a)$ ($a=1,\,2$)
and Eq.~(\ref{eq:ANbeta2N}) becomes 
\begin{equation}
A^{(2)}_2(k_1,k_2) = 
(2-k_1)!(4-k_2)!+(4-k_1)!(2-k_2)!-2(3-k_1)!(3-k_2)! .
\end{equation}
Here, $N\beta/2+1=3$ such that the maximum order of the negative moments is $2$;
that is, $k_a=0,\,1,\,2$ ($a=1,\,2$). 

For $k_1=k_2=0$ we obtain the normalization constant, 
$C_2^{(2)}=1/2!(3!\cdot 2!\cdot 1!)$. The corresponding moments for
$k_1=k$ and $k_2=0$ are 
\begin{equation}
\label{eq:x1kN2b2}
\left\langle x_1^{-k}\right\rangle^{(2)} = \frac{(2-k)!}{2!}\, K_2^{(2)}(k,0) ,
\end{equation}
where
\begin{equation}
\label{eq:K22k}
K_2^{(2)}(k,0) =
\frac{1}{3!} \left[
12-6(3-k)+(3-k)(4-k) \right] .
\end{equation}
Also, for $k_1=k_2=k$ we have 
\begin{equation}
\label{eq:x1kx2kN2b2}
\left\langle x_1^{-k}x_2^{-k}\right\rangle^{(2)} = 
\frac{(3-k)!}{3!} \, \frac{(2-k)!}{2!}.
\end{equation}
The remaining joint negative moment is
\begin{equation}
\label{eq:x12x2N=2}
\left\langle x_1^{-2}x_2^{-1}\right\rangle^{(2)} = 
\frac{1!}{3!}.
\end{equation}

\subsubsection{The $N=3$ case}

For $N=3$, $\gamma_a=\sigma_1(a)+\sigma_2(a)+1$ ($a=1,\,2\,,3$) and
Eq.~(\ref{eq:ANbeta2N}) becomes 
\begin{eqnarray}
\label{eq:A32}
A^{(2)}_3(k_1,k_2,k_3) & = & 
(3-k_1)!(5-k_2)!(7-k_3)! + (3-k_1)!(7-k_2)!(5-k_3)! 
\nonumber \\ & + &
(5-k_1)!(3-k_2)!(7-k_3)! + (5-k_1)!(7-k_2)!(3-k_3)! 
\nonumber \\ & + &
(7-k_1)!(3-k_2)!(5-k_3)! + (7-k_1)!(5-k_2)!(3-k_3)! 
\nonumber \\ & - & 
2(3-k_1)!(6-k_2)!(6-k_3)! - 2(6-k_1)!(3-k_2)!(6-k_3)! 
\nonumber \\ & - &
2(6-k_1)!(6-k_2)!(3-k_3)!- 2(4-k_1)!(4-k_2)!(7-k_3)! 
 \\ & - & 
2(4-k_1)!(7-k_2)!(4-k_3)!- 2(7-k_1)!(4-k_2)!(4-k_3)! 
\nonumber \\ & + & 
2(4-k_1)!(5-k_2)!(6-k_3)! + 2(4-k_1)!(6-k_2)!(5-k_3)! 
\nonumber \\ & + &
2(6-k_1)!(5-k_2)!(4-k_3)! + 2(5-k_1)!(4-k_2)!(6-k_3)! 
\nonumber \\ & + & 
2(5-k_1)!(6-k_2)!(4-k_3)! + 2(6-k_1)!(4-k_2)!(5-k_3)!
\nonumber \\ & - &
6(5-k_1)!(5-k_2)!(5-k_3)! 
\nonumber
\end{eqnarray}
The maximum value of $k_a$ ($a=1\,,2\,,3$) is 3. If we evaluate this
expression at $k_1=k_2=k_3=0$ we obtain 
$C_3^{(2)}=1/3!(5!\cdot 4!\cdot 3!\cdot 2!\cdot 1!)$.

The result for the moments for $k_1=k$ and $k_2=k_3=0$ is given by 
\begin{equation}
\label{eq:x1kN3b2}
\left\langle x_1^{-k}\right\rangle^{(2)}  =  
\frac{(3-k)!}{3!}\, K_3^{(2)}(k,0,0),
\end{equation}
where
\begin{eqnarray}
\label{eq:K32k}
K_3^{(2)}(k,0,0) & = & 
\frac{1}{6!} 
\Big[ 5\cdot 6! -4\cdot 6!(4-k) +150\cdot 3!(4-k)(5-k) 
\nonumber \\ & - & 
5!(4-k)(5-k)(6-k) + 3!(4-k)(5-k)(6-k)(7-k) \Big].
\end{eqnarray}
In similar way, for $k_1=k_2=k$ and $k_3=0$ we have that
\begin{equation}
\label{eq:x1kx2kN3b2}
\left\langle x_1^{-k} x_2^{-k} \right\rangle^{(2)} = 
\frac{(3-k)!}{3!}\, \frac{(4-k)!}{4!}\, K_3^{(2)}(k,k,0),
\end{equation}
where
\begin{eqnarray}
\label{eq:K32kk}
K_3^{(2)}(k,k,0) & = & 
\frac{1}{6!} \Big[ 7! -6!(k-1)(5-k) 
\nonumber \\ & - & 
4\cdot 4!(4-k)(5-k)(6-k) +3!(4-k)(5-k)^2(6-k) 
\Big].
\end{eqnarray}
The moments for $k_1=k_2=k_3=k$ are given by 
\begin{equation}
\label{eq:x1kx2kx3kN3b2}
\left\langle x_1^{-k} x_2^{-k} x_3^{-k} \right\rangle^{(2)} = 
\frac{(5-k)!}{5!}\, \frac{(4-k)!}{4!} \, \frac{(3-k)!}{3!}.
\end{equation}

The remaining terms can be evaluated directly from Eq.~(\ref{eq:A32}). They are
the following:
\begin{equation}
\label{eq:x12x2N=3}
\left\langle x_1^{-2} x_2^{-1} \right\rangle^{(2)} = \frac{3!}{5!},
\quad 
\left\langle x_1^{-3} x_2^{-1} \right\rangle^{(2)} = \frac{62}{6!},
\quad\mbox{and}\quad  
\left\langle x_1^{-3} x_2^{-2} \right\rangle^{(2)} = \frac{3}{5!} \, ;
\end{equation}
also, 
\begin{equation}
\label{eq:q211}
\left\langle x_1^{-2} x_2^{-1} x_3^{-1} \right\rangle^{(2)} = 
\frac{6}{6!}, \, 
\left\langle x_1^{-2} x_2^{-2} x_3^{-1} \right\rangle^{(2)} = 
\frac{2!}{6!}, \,
\left\langle x_1^{-3} x_2^{-1} x_3^{-1} \right\rangle^{(2)} = 
\frac{10}{6!},\, 
\left\langle x_1^{-3} x_2^{-2} x_3^{-1} \right\rangle^{(2)} = 
\frac{5}{2\cdot 6!} \,;
\end{equation}

\begin{equation} 
\label{eq:q32}
\left\langle x_1^{-3} x_2^{-2} x_3^{-2} \right\rangle^{(2)} = 
\frac{1}{2\cdot 6!},
\quad
\left\langle x_1^{-3} x_2^{-3} x_3^{-1} \right\rangle^{(2)} = 
\frac{1}{4\cdot 5!}, 
\quad\mbox{and}\quad
\left\langle x_1^{-3} x_2^{-3} x_3^{-2} \right\rangle^{(2)} = 
\frac{1}{4\cdot 6!} \,.
\end{equation}

\subsubsection{The $N=4$ case}

From Eq.~(\ref{eq:gammai}), $\gamma_a=\sigma_1(a)+\sigma_2(a)+2$
($a=1\,,2\,,3\,,4$); in this case Eq~(\ref{eq:ANbeta2N}) gives 
\begin{eqnarray}
\label{eq:A42-1}
A_4^{(2)}(k_1,k_2,k_3,k_4) & = &
(4-k_1)!(6-k_2)!(8-k_3)!(10-k_4)! - (4-k_1)!(6-k_2)!(9-k_3)!(9-k_4)! 
\nonumber \\ & - &
(4-k_1)!(7-k_2)!(7-k_3)!(10-k_4)! + (4-k_1)!(7-k_2)!(9-k_3)!(8-k_4)! 
\nonumber \\ & + & 
(4-k_1)!(8-k_2)!(7-k_3)!(9-k_4)! - (4-k_1)!(8-k_2)!(8-k_3)!(8-k_4)! 
\nonumber \\ & - & 
(5-k_1)!(5-k_2)!(8-k_3)!(10-k_4)! + (5-k_1)!(5-k_2)!(9-k_3)!(9-k_4)! 
\nonumber \\ & + & 
(5-k_1)!(7-k_2)!(6-k_3)!(10-k_4)! - (5-k_1)!(7-k_2)!(9-k_3)!(7-k_4)! 
\nonumber \\ & - & 
(5-k_1)!(8-k_2)!(6-k_3)!(9-k_4)! + (5-k_1)!(8-k_2)!(8-k_3)!(7-k_4)! 
\nonumber \\ & + & 
(6-k_1)!(5-k_2)!(7-k_3)!(10-k_4)! - (6-k_1)!(5-k_2)!(9-k_3)!(8-k_4)! 
\nonumber \\ & - & 
(6-k_1)!(6-k_2)!(6-k_3)!(10-k_4)! + (6-k_1)!(6-k_2)!(9-k_3)!(7-k_4)! 
\nonumber \\ & + & 
(6-k_1)!(8-k_2)!(6-k_3)!(8-k_4)! - (6-k_1)!(8-k_2)!(7-k_3)!(7-k_4)! 
\nonumber \\ & - & 
(7-k_1)!(5-k_2)!(7-k_3)!(9-k_4)! + (7-k_1)!(5-k_2)!(8-k_3)!(8-k_4)! 
\nonumber \\ & + & 
(7-k_1)!(6-k_2)!(6-k_3)!(9-k_4)! - (7-k_1)!(6-k_2)!(8-k_3)!(7-k_4)! 
\nonumber \\ & - & 
(7-k_1)!(7-k_2)!(6-k_3)!(8-k_4)! + (7-k_1)!(7-k_2)!(7-k_3)!(7-k_4)!
\nonumber \\ & + & 
(\mbox{permutations of }\, k_1,\, k_2,\, k_3,\, k_4).
\end{eqnarray}
This equation is well defined because the maximum allowed value for
$k_a$ ($a=1,2,3,4$) is 4. 

Although the calculation of $A_4^{(2)}(k_1,k_2,k_3,k_4)$ for arbitrary set
of values of the $k_a$'s is not difficult, it consists of many terms that are
not easy to follow. Two quantities are clearly feasible: one for $k_a=0$ and the
other for $k_a=k$. This is due to the fact that the permutations of $k_a$'s in
Eq.~(\ref{eq:A42-1}) give the same terms that have been explicitly written. The
first quantity gives the normalization constant, 
$C_4^{(2)}=1/4!\left(7!\cdot 6!\cdot 5!\cdot 4!\cdot 3!\cdot 2!\cdot 1!\right)$.
The second quantity is the moment
\begin{equation}
\label{eq:x1kx2kx3kx4kN4b2}
\left\langle x_1^{-k} x_2^{-k} x_3^{-k} x_4^{-k} \right\rangle^{(2)}=
\frac{(7-k)!}{7!}\, \frac{(6-k)!}{6!}\, \frac{(5-k)!}{5!}\, \frac{(4-k)!}{4!}.
\end{equation}

Any other moment is difficult to compute arithmetically, with great effort we
arrive to the following results:
\begin{equation}
\label{eq:x1N=4}
\left\langle x_1^{-1} \right\rangle^{(2)} = \frac{3!}{4!},
\quad
\left\langle x_1^{-2}\right\rangle^{(2)} = \frac{8\cdot 2!}{5!}, 
\quad
\left\langle x_1^{-3}\right\rangle^{(2)} = \frac{16\cdot 3!}{6!}, 
\quad\mbox{and}\quad
\left\langle x_1^{-4}\right\rangle^{(2)} = \frac{59\cdot 4!}{7!}; 
\end{equation}

\begin{equation}
\label{eq:x1x2N=4}
\left\langle x_1^{-1} x_2^{-1} \right\rangle^{(2)} = 
\frac{6}{5!}, \quad 
\left\langle x_1^{-2} x_2^{-2} \right\rangle^{(2)} = 
\frac{3}{6!}\left(\frac{8\cdot 9}{6\cdot 7}\right), 
\quad\mbox{and}\quad 
\left\langle x_1^{-3} x_2^{-3} \right\rangle^{(2)} =
\frac{8\cdot 5!}{9!} ;
\end{equation}
\begin{equation}
\left\langle x_1^{-1} x_2^{-1} x_3^{-1} \right\rangle^{(2)} =
\frac{1}{5!},
\quad
\left\langle x_1^{-2} x_2^{-2} x_3^{-2} \right\rangle^{(2)} =
\frac{1}{7!},
\quad\mbox{and}\quad 
\left\langle x_1^{-3} x_2^{-3} x_3^{-3} \right\rangle^{(2)} =
\frac{51}{10!} .
\end{equation}

\subsection{Explicit calculations for $\beta=4$}

For $\beta=4$, Eq.~(\ref{eq:gammai}) says that
$\gamma_a=2N-4+\sigma_1(a)+\sigma_2(a)+\sigma_3(a)+\sigma_4(a)$
($a=1,\ldots,N$), such that Eq.~(\ref{eq:A(k)}) gives 
\begin{equation}
\label{eq:ANbeta4N}
A^{(4)}_N(k_1,\ldots,k_N) = 
\sum_{\sigma_1,\sigma_2,\sigma_3\sigma_4\in S_N} 
\textrm{sgn}\,(\sigma_1\sigma_2\sigma_3\sigma_4) 
\prod_{i=1}^N
\frac{(\gamma_i-k_i)!}{2^{\gamma_i-k_i}}.
\end{equation}
The normalization constant is given by $C_N^{(4)}=2^N/A_N^{(4)}$ and the
moments are given by Eq.~(\ref{eq:calculation}). 

\subsubsection{The $N=1$ case}

In this case, $\gamma=(\sigma_1+\sigma_2+\sigma_3+\sigma_4-k-2)!$. 
Equation~(\ref{eq:ANbeta4N}) can be written as
\begin{equation}
\label{eq:ANbeta4N1}
A^{(4)}_1(k) = 2^k\frac{(2-k)!}{2^2} ,
\end{equation}
such that $A^{(4)}_1=1/2$ and $C^{(4)}_1=2^2$. Therefore, 
\begin{equation}
\label{eq:x1kN1b4}
\left\langle x_1^{-k} \right\rangle^{(4)} = 2^k\frac{(2-k)!}{2!} 
\quad\mbox{for}\quad k=1\,,2.
\end{equation}

\subsubsection{The $N=2$ case}

Here, $\sigma_i\in S_2$ ($i=1,2,3,4$) and 
\begin{eqnarray}
\label{eq:ANbeta4N2}
A^{(4)}_2(k_1,k_2) & = & 
2^{k_1+k_2-12}\left[(4-k_1)!(8-k_2)! - 4(5-k_1)!(7-k_2)! 
+ 3(6-k_1)!(6-k_2)!\right]
\nonumber \\ & + &
(\mbox{permutations of}\, k_1\,\mbox{and}\, k_2).
\end{eqnarray}
Taking $k_1=k_2=0$, we have that 
$C_2^{(4)}=2^{14}/3!(5!\cdot 4!\cdot 3!\cdot 2!\cdot 1!)$, while for
$k_1=k_2=k$, 
\begin{equation}
\label{eq:x1kx2kN2b4}
\left\langle x_1^{-k}x_2^{-k}\right\rangle^{(4)} = 
2^{2k}\frac{(6-k)!}{6!}\, \frac{(4-k)!}{4!}, 
\quad\mbox{for}\quad k=1\,,2\,,3\,,4. 
\end{equation}
Also, for $k_1=k$ and $k_2=0$, we get 
\begin{equation}
\label{eq:x1kN2b4}
\left\langle x_1^{-k} \right\rangle^{(4)} = 
2^k\frac{(4-k)!}{4!}\, K_2^{(4)}(k,0), 
\end{equation}
where 
\begin{equation}
\label{eq:KN2b4}
K_2^{(4)}(k,0) = \frac{1}{6!\cdot 3!\cdot 2!}
\left[ 8! + 2\cdot 6!(5-k)(4-3k)-4!(12+k)(7-k)(6-k)(5-k) \right].
\end{equation}

From Eq.~(\ref{eq:ANbeta4N2}) it is easy to see that the remaining moments are
given by
\begin{equation}
\label{eq:x21}
\left\langle x_1^{-2} x_2^{-1} \right\rangle^{(4)} = \frac{1}{9}, \quad 
\left\langle x_1^{-3} x_2^{-1} \right\rangle^{(4)} = \frac{7}{45},
\quad\mbox{and}\quad
\left\langle x_1^{-3} x_2^{-2} \right\rangle^{(4)} = \frac{2}{45};
\end{equation}
\begin{equation}
\left\langle x_1^{-4} x_2^{-1} \right\rangle^{(4)} = \frac{22}{45}, \quad 
\left\langle x_1^{-4} x_2^{-2} \right\rangle^{(4)} = \frac{2}{15},
\quad\mbox{and}\quad
\left\langle x_1^{-4} x_2^{-3} \right\rangle^{(4)} = \frac{2}{45}.
\end{equation}

\subsection{Joint moments of proper delay times for arbitraries $N$ and
$\beta$} 
\label{sub:numerics}

The set of equations (\ref{eq:x1kN1b1}), (\ref{eq:x1kN2b1}), (\ref{eq:x1kN1b2}),
(\ref{eq:x1kN2b2}), (\ref{eq:x1kN3b2}), (\ref{eq:x1kN1b4}), and
(\ref{eq:x1kN2b4}); (\ref{eq:x1kx2kN2b1}), (\ref{eq:x1kx2kN2b2}),
(\ref{eq:x1kx2kN3b2}), and (\ref{eq:x1kx2kN2b4}); (\ref{eq:x1kx2kx3kN3b2}) and
(\ref{eq:x1kx2kx3kx4kN4b2}), suggest a general expression for the joint moments
for any symmetry and number of channels, which is 
\begin{equation}
\label{eq:xNlk}
\left\langle q_1^k\cdots q_m^k\right\rangle^{(\beta)} = 
\left[
\prod_{n=N}^{N+m-1} 
\left( \frac{\beta}{2} \right)^k
\frac{\left(\frac{\beta n}{2}-k\right)!}
{\left(\frac{\beta n}{2}\right)!} \right] \, 
K_N^{(\beta)}\big(\underbrace{k,\ldots,k}_m,\underbrace{0,\ldots,0}_{N-m}\big)
\quad\mbox{for}\quad m\leq N,
\end{equation}
where
$K_N^{(\beta)}\big(\underbrace{k,\ldots,k}_m,\underbrace{0,\ldots,0}_{N-m}\big)$
has a particular expression for each values of $N$ and $m$, as can be seen in
Eqs.~(\ref{eq:K2b110}), (\ref{eq:K22k}), (\ref{eq:K32k}), (\ref{eq:K32kk}), and
(\ref{eq:KN2b4}). A closed expression for this quantity is difficult to
obtain analytically, but it reduces to 1 for $m=N$, as is suggested also
by Eqs.~(\ref{eq:x1kN1b1}), (\ref{eq:x1kx2kN2b1}), (\ref{eq:x1kN1b2}),
(\ref{eq:x1kx2kN2b2}), (\ref{eq:x1kx2kx3kN3b2}), (\ref{eq:x1kx2kx3kx4kN4b2}),
(\ref{eq:x1kN1b4}), and (\ref{eq:x1kx2kN2b4}), 
\begin{equation}
K_N^{(\beta)} (k,\ldots,k)=1, 
\end{equation}
as well as for $k=1$ and $m\leq N$, 
\begin{equation}
\label{eq:K10}
K_N^{(\beta)} \big( \underbrace{1,\ldots,1}_m, 
\underbrace{0,\ldots,0}_{N-m} \big) = 1, 
\end{equation}
as can be seen by the direct evaluation of the equations just mentioned above. 

A particular case of interest is the $k$th moment for $m=1$ (single variable);
we find that 
\begin{equation}
\label{eq:qNk}
\left\langle q_1^k \right\rangle^{(\beta)} = 
\left(\frac{\beta}{2}\right)^k 
\frac{\left(\frac{\beta N}{2}-k\right)!}{\left(\frac{\beta N}{2}\right)!}\,
K_N^{(\beta)}(k,0,\ldots,0), 
\quad\mbox{for}\quad k< \frac{\beta N}{2}+1.
\end{equation}
What is interesting of this result is that the $k$th moment of a proper delay
time differs from that of a partial delay time, as can be seen if we compare
this result for $\beta=2$ with
Eq.~(\ref{eq:k-partial}).~\cite{FyodorovJMP,FyodorovPRL96} Moreover, it allows
us to generalize the $k$th moment for a partial time for any $\beta$ and $N$,
namely 
\begin{equation}
\label{eq:qNkpartial}
\left\langle \tau_s^k \right\rangle^{(\beta)} = 
\left(\frac{\beta}{2}\right)^k 
\frac{\left(\frac{\beta N}{2}-k\right)!}{\left(\frac{\beta N}{2}\right)!}
\quad\mbox{for}\quad k< \frac{\beta N}{2}+1.
\end{equation}
That is, the distribution for the partial times can be also generalized to any
$\beta$ and $N$ by replacing $N$ by $\beta N/2$ in the corresponding
distribution of the $\beta=2$ case of
Eq.~(\ref{eq:Ps}).~\cite{FyodorovJMP,FyodorovPRL96,Seba} The quantitative
difference given by the factor $K_N^{(\beta)}(k,0,\ldots,0)$ in
Eq.~(\ref{eq:qNk}), comes from the level repulsion for the proper delay times,
as happens for their corresponding density.~\cite{Frahm2} The only exception to
this rule is, of course, the case $N=1$. 

We can also notice that the factor $K_N^{(\beta)}(k,0,\ldots,,0)$ has the
following expression:
\begin{equation}
\label{eq:Kk123}
K_N^{(\beta)}(k,0,\ldots,0) = \frac{k!N^{k-1}N!}{(N+k-1)!} 
\quad\mbox{for}\quad k=1,2,3,
\end{equation}
independent of $\beta$.

From Eqs.~(\ref{eq:qNk}) and (\ref{eq:Kk123}) we obtain the following
interesting averages:
\begin{equation}
\label{eq:derivadapar}
\left\langle q_1 \right\rangle^{(\beta)} = \frac{1}{N}, 
\quad
\left\langle q_1^2 \right\rangle^{(\beta)} = \frac{\beta}{2}\, 
\frac{2}{\left(\frac{\beta N}{2}-1\right)(N+1)},
\end{equation}
and
\begin{equation}
\left\langle q_1^3 \right\rangle^{(\beta)} = 
\left(\frac{\beta}{2}\right)^2 
\frac{3!\, N}
{\left(\frac{\beta N}{2}-1\right)\left(\frac{\beta N}{2}-2\right)(N+1)(N+2)}, 
\end{equation}
which agrees with those of Ref.~\onlinecite{Berkolaiko} for
$\beta=2$ in the semiclassical limit, and of Ref.~\onlinecite{Mezzadri1} for
any $\beta$ and $N$. Both results in Eq.~(\ref{eq:derivadapar}) were reported in
Refs.~\onlinecite{Castano,MartinezMares} for $\beta=1$ and 2. 
Another case that can be easily obtained from Eqs.~(\ref{eq:xNlk}) and
(\ref{eq:K10}), which is of particular interest, is 
\begin{equation}
\label{eq:pumping}
\left\langle q_1\,q_2\right\rangle^{(\beta)} = 
\frac{1}{N(N+1)} .
\end{equation}
which was reported in Ref.~\onlinecite{Mucciolo} for $\beta=2$. 

Since the Wigner time delay is $\tau_W=\tau_H\sum_i^Nq_i/N$, its mean and
variance can be calculated from Eqs.~(\ref{eq:derivadapar}) and
(\ref{eq:pumping}) for any $\beta$ and $N$; the results are  
\begin{equation}
\tau_W = \frac{\tau_H}{N} , 
\quad
\frac{\langle\tau_W^2\rangle-\langle\tau_W\rangle^2}
{\langle\tau_W\rangle^2} = 
\frac{2}{\left(\frac{\beta N}{2}-1\right)(N+1)}, 
\end{equation}
which for $\beta=2$ reduces to those of
Refs.~\onlinecite{FyodorovPRL96,FyodorovJMP}. 

From Eqs.~(\ref{eq:x1kx2kN2b2}), (\ref{eq:x1kx2kN3b2}), (\ref{eq:K32kk}), the
second equation in (\ref{eq:x1x2N=4}), and Eq.~(\ref{eq:x1kx2kN2b4}), is
feasible to find that
\begin{equation}
K_N^{(\beta)}(2,2,0,\ldots,0) = \frac{2N(2N+1)}{(N+2)(N+3)} .
\end{equation}
Although we do not have explicit results for $\beta=1$, we have verified this
expression numerically. This result allows to obtain
\begin{equation}
\label{eq:q2q2}
\left\langle q_1^2\,q_2^2\right\rangle^{(\beta)} = 
\left(\frac{\beta}{2}\right)^4 
\frac{\left(\frac{\beta N}{2}-2\right)!}{\left(\frac{\beta N}{2}\right)!} \,
\frac{\left[\frac{\beta(N+1)}{2}-2\right]!}{\left[\frac{\beta(N+1)}{2}\right]!}
\, K_N^{(\beta)}(2,2,0,\ldots,0).
\end{equation}

\begin{table}
\begin{tabular}{cc|c|cc}
 & $\beta=1$ & $\beta=2$ & $\beta=4$ \\ \hline
\begin{tabular}{c|}
$N$ \\ \hline 
$\left\langle q_1^2 q_2 \right\rangle$ \\ 
$\left\langle q_1^3 q_2 \right\rangle$ \\ 
$\left\langle q_1^3 q_2^2 \right\rangle$ \\
$\left\langle q_1^2 q_2 q_3 \right\rangle$ \\
$\left\langle q_1^2 q_2^2 q_3 \right\rangle$
\end{tabular}
& 
\begin{tabular}{cccc}
2 & 3 & 4 & 5 \\ \hline 
--- & --- & $\frac{4\cdot 6\cdot 1!}{6!}$ &  \\ 
--- & --- & --- &  \\ --- & --- & --- & \\
--- & $\frac{2\cdot 0!}{5!}$ & $\frac{2\cdot 12\cdot 1!}{7!}$ &  \\ 
--- & --- & $\frac{2\cdot 36\cdot 1!}{8!}$ & 
\end{tabular} 
& 
\begin{tabular}{cccc}
2 & 3 & 4 & 5 \\ \hline 
$\frac{2\cdot 2\cdot 0!}{4!}$ & $\frac{2\cdot 3\cdot 1!}{5!}$ & $\frac{2\cdot
4\cdot 2!}{6!}$ & $\frac{2\cdot 5\cdot 3!}{7!}$ \\ 
--- & $\frac{2\cdot 31\cdot 0!}{6!}$ & $\frac{2\cdot 53\cdot 1!}{7!}$ &
$\frac{2\cdot 81\cdot 2!}{8!}$ \\ 
--- & $\frac{6\cdot 3\cdot 7\cdot 0!}{7!}$ & 
$\frac{6\cdot 4\cdot 9\cdot 1!}{8!}$ & $\frac{6\cdot 5\cdot 11\cdot 2!}{9!}$ \\
--- & $\frac{2\cdot 3\cdot 1!}{6!}$ & $\frac{2\cdot 4\cdot 2!}{7!}$ &
$\frac{2\cdot 5\cdot 3!}{8!}$ \\ 
--- & $\frac{2\cdot 7\cdot 1!}{7!}$ & $\frac{2\cdot 9\cdot 2!}{8!}$ &
$\frac{2\cdot 11\cdot 3!}{9!}$
\end{tabular} 
& 
\begin{tabular}{cccc}
2 & 3 & 4 & 5 \\ \hline 
$\frac{4\cdot 2\cdot 1!}{3\cdot 4!}$ & $\frac{4\cdot 3\cdot 2!}{5\cdot 5!}$ &
$\frac{4\cdot 4\cdot 3!}{7\cdot 6!}$ & $\frac{4\cdot 5\cdot 4!}{9\cdot 7!}$ \\ 
$\frac{4\cdot 2\cdot 7\cdot 0!}{3\cdot 5!}$ & 
$\frac{4\cdot 3\cdot 10\cdot 1!}{5\cdot 6!}$ & 
$\frac{4\cdot 4\cdot 13\cdot 2!}{7\cdot 7!}$ &
$\frac{4\cdot 5\cdot 16\cdot 3!}{9\cdot 8!}$ \\ 
$\frac{6\cdot 4\cdot 4\cdot 0!}{3\cdot 6!}$ & 
$\frac{6\cdot 4\cdot 9\cdot 1!}{5\cdot 7!}$ & 
$\frac{6\cdot 4\cdot 16\cdot 2!}{7\cdot 8!}$ & 
$\frac{6\cdot 4\cdot 25\cdot 3!}{9\cdot 9!}$ \\
--- & $\frac{4\cdot 3\cdot 2!}{5\cdot 6!}$ & 
$\frac{4\cdot 4\cdot 3!}{7\cdot 7!}$ & $\frac{4\cdot 5\cdot 4!}{9\cdot 8!}$ \\
--- & $\frac{8\cdot 3\cdot 2!}{5\cdot 7!}$ & 
$\frac{8\cdot 4\cdot 3!}{7\cdot 8!}$ & $\frac{8\cdot 5\cdot 4!}{9\cdot 9!}$
\end{tabular}
\end{tabular}
\caption{Summary of results for other moments for $\beta=1,2,4$, some of which
were obtained numerically.}
\label{tab:table}
\end{table}

In Table~\ref{tab:table} we summarize some moments expressed in
Eqs.~(\ref{eq:x12x2N=2}), (\ref{eq:x12x2N=3}), (\ref{eq:q211}), (\ref{eq:x21}),
and others that were obtained numerically. They can easily be generalized to 
\begin{equation}
\left\langle q_1^2\,q_2 \right\rangle^{(\beta)} = 
\left( \frac{\beta}{2} \right)^3 
\frac{ \left( \frac{\beta N}{2}-2\right)! }{\left( \frac{\beta N}{2}\right)! }
\frac{ \left[ \frac{\beta (N+1)}{2}-1\right]! }
{\left[ \frac{\beta(N+1)}{2}\right]! }
K_N^{(\beta)}(2,1,0,0,\ldots,0),
\end{equation}
with 
\begin{equation}
K_N^{(\beta)}(2,1,0,0,\ldots,0) = \frac{2N}{N+2}; 
\end{equation}
\begin{equation}
\label{eq:q3q1}
\left\langle q_1^3\,q_2\right\rangle^{(\beta)} = 
\left( \frac{\beta}{2} \right)^4 
\frac{\left(\frac{\beta N}{2}-3\right)!}{\left(\frac{\beta N}{2}\right)!} \,
\frac{\left[\frac{\beta(N+1)}{2}-1\right]!}{\left[\frac{\beta(N+1)}{2}\right]!}
\, K_N^{(\beta)}(3,1,0,0,\ldots,0) , 
\end{equation}
where
\begin{equation}
K_N^{(\beta)}(3,1,0,0,\ldots,0) = 
\frac{2\left[3N^2+N+(2-\beta/2) \right]}
{(N+3)(N+2)};
\end{equation}
\begin{equation}
\left\langle q_1^3\,q_2^2\right\rangle^{(\beta)} = 
\left( \frac{\beta}{2} \right)^5 
\frac{\left(\frac{\beta N}{2}-3\right)!}{\left(\frac{\beta N}{2}\right)!} \,
\frac{\left[\frac{\beta(N+1)}{2}-2\right]!}{\left[\frac{\beta(N+1)}{2}\right]!}
\, K_N^{(\beta)}(3,2,0,0,\ldots,0) , 
\end{equation}
with
\begin{equation}
K_N^{(\beta)}(3,2,0,0,\ldots,0) = 
\frac{6N^2(2N+1)}{(N+4)(N+3)(N+2)};
\end{equation}
\begin{equation}
\label{eq:q2q1q1}
\left\langle q_1^2 q_2 q_3 \right\rangle^{(\beta)} = 
\left(\frac{\beta}{2}\right)^4 
\frac{ \left( \frac{\beta N}{2}-2\right)! }{\left( \frac{\beta N}{2}\right)! }
\frac{ \left[ \frac{\beta (N+1)}{2}-1\right]! }
{\left[ \frac{\beta(N+1)}{2}\right]! }
\frac{ \left[ \frac{\beta (N+2)}{2}-1\right]! }
{\left[ \frac{\beta(N+2)}{2}\right]! }
K_N^{(\beta)}(2,1,1,0,\ldots,0),
\end{equation}
where 
\begin{equation}
K_N^{(\beta)}(2,1,1,0,\ldots,0) = 
\frac{2N}{N+3},
\end{equation}
and 
\begin{equation}
\left\langle q_1^2 q_2^2 q_3 \right\rangle^{(\beta)} = 
\left(\frac{\beta}{2}\right)^5 
\frac{ \left( \frac{\beta N}{2}-2\right)!}{\left( \frac{\beta N}{2}\right)!}
\frac{ \left[ \frac{\beta (N+1)}{2}-2\right]! }
{\left[ \frac{\beta(N+1)}{2}\right]! }
\frac{ \left[ \frac{\beta (N+2)}{2}-1\right]! }
{\left[ \frac{\beta(N+2)}{2}\right]! }
K_N^{(\beta)}(2,2,1,0,\ldots,0)
\end{equation}
with 
\begin{equation}
K_N^{(\beta)}(2,2,1,0,\ldots,0) = 
\frac{2N(2N+1)}{(N+4)(N+3)}.
\end{equation}

The results of Eqs.~(\ref{eq:q2q2}), (\ref{eq:q3q1}) and (\ref{eq:q2q1q1}) were
reported in Ref.~\onlinecite{Mucciolo} for $\beta=2$.


\section{Conclusions}
\label{sec:conclusions}

We have calculated generalized joint moments of proper delay times for an
arbitrary number of channels $N$ and any symmetry $\beta=1,2$ and $4$, which are
needed to quantify transport properties, or their fluctuations, through
ballistic open systems. This was done by reducing the calculation of the
negative moments of the generalized Laguerre distribution to simpler formulas,
which are easier to manage analytically. As an important result we show that the
$k$th moment of a proper delay time differs from that of the partial delay time,
where the difference comes from the level repulsion of the proper delay times.
From our results, also, we were able to generalize the distribution of the
partial times to any symmetry. Our general expressions reproduce the existing
results for particular cases and those obtained for individual proper and
partial delay times. Also, we obtained the mean and variance of the Wigner time
delay for arbitraries $N$ and $\beta$, which reproduces the known results for
$\beta=2$. Although we regarded perfect coupling of the system to the
open channels, we hope that our results encourage further calculations to
include an imperfect coupling, as was done for partial delay times. 

\acknowledgments

M. Mart\'inez-Mares and J. C. Garc\'ia are grateful with the Sistema Nacional de
Investigadores, Mexico. M. Mart\'inez-Mares is also grateful with M. A.
Torres-Segura for her encouragement. A. M. Mart\'inez-Arg\"uello thanks
CONACyT, Mexico for financial support.

\appendix

\section{Properties associated to $F_{\theta_n}$}
\label{appendix}

There are some relations between the Gamma, Poisson and Negative Binomial 
probability distributions that are summarized in the following propositions:

\begin{proposition}
\label{relaciones}
For $b,b_1,b_2>0$, $k,m,n\in\mathbb{Z_+}$, and $x\ge 0$,
\begin{eqnarray}
& & F_{n,b}(x) =
\int_0^x f_{n,b} (t)\,\mathrm{d}t
= \sum_{k=n}^{\infty} P_{\mathrm{Poisson}}(bx,k), 
\label{prop:1a} \\
& & \int_{0}^{x} f_{m,b_2}(t) P_{\mathrm{Poisson}}(b_1t,k) \mathrm{d}t = 
NB_{m,p}(k) F_{m+k,b_1+b_2}(x), 
\label{prop:1b}
\end{eqnarray}
where $p=b_2/(b_1+b_2)$ and $P_{\mathrm{Poisson}}(\lambda,k)$ is the Poisson
distribution with parameter $\lambda$ given by
\begin{equation}
P_{\mathrm{Poisson}}(\lambda,k) = \frac{\lambda^k}{k!} 
\mathrm{e}^{-\lambda}, \quad\mbox{with}\quad k=0, 1,\ldots\, ,
\end{equation}
and $NB_{m,p}(k)$ is the Negative Binomial distribution~\cite{Abramowitz} with
parameters $m\in\mathbb{N}$ and $p\in[0,1]$:
\begin{equation}
\label{eq:NBinomial}
NB_{m,p}(k) = \left( \begin{array}{c} m-1+k \\ m-1 \end{array} \right)
\left( 1-p \right)^k p^m, \quad\mbox{with}\quad k=0,1,\ldots\,.
\end{equation}
\end{proposition}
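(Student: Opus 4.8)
The plan is to establish the two identities (\ref{prop:1a}) and (\ref{prop:1b}) independently, since each reduces to an elementary manipulation of the Gamma density (\ref{eq:density-Gamma}) once the integer shape parameter is used.

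For (\ref{prop:1a}) I would show that both sides have the same derivative in $x$ and the same value at $x=0$. Differentiating the left-hand side returns the integrand $f_{n,b}(x)$ directly. On the right-hand side, differentiating each term $P_{\mathrm{Poisson}}(bx,k)=(bx)^k e^{-bx}/k!$ produces $b$ times the difference of two consecutive Poisson weights; summing from $k=n$ to $\infty$ the series telescopes and leaves only the boundary term $b\,(bx)^{n-1}e^{-bx}/(n-1)!$, which is exactly $f_{n,b}(x)$. Since $n\ge 1$, both sides vanish at $x=0$, so they agree for all $x\ge 0$. An equivalent route is repeated integration by parts on $\int_0^x f_{n,b}(t)\,\mathrm{d}t$, each step lowering the shape by one and peeling off one Poisson term.

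For (\ref{prop:1b}) I would substitute the explicit density and Poisson weight into the integrand and collect the elementary factors. The product $f_{m,b_2}(t)\,P_{\mathrm{Poisson}}(b_1 t,k)$ equals $\frac{b_2^m b_1^k}{(m-1)!\,k!}\,t^{m+k-1}e^{-(b_1+b_2)t}$, so the two exponentials merge into the single rate $b_1+b_2$ and the powers of $t$ add to $m+k-1$. The remaining integral $\int_0^x t^{m+k-1}e^{-(b_1+b_2)t}\,\mathrm{d}t$ is, up to the constant $\frac{(m+k-1)!}{(b_1+b_2)^{m+k}}$, precisely the Gamma cumulative distribution $F_{m+k,b_1+b_2}(x)$. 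Regrouping the surviving constants produces the factor $\frac{(m+k-1)!}{(m-1)!\,k!}$ multiplied by $p^m(1-p)^k$, where $p=b_2/(b_1+b_2)$ and $1-p=b_1/(b_1+b_2)$; this is exactly the weight $NB_{m,p}(k)$ of (\ref{eq:NBinomial}), which gives the claimed factorization.

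Neither step poses a serious obstacle; the only care required is bookkeeping. The hard part, such as it is, lies in recombining the combinatorial and power-of-$p$ factors in (\ref{prop:1b}) so that they assemble into the negative binomial weight rather than a rescaled variant --- in particular recognizing that $\frac{(m+k-1)!}{(m-1)!\,k!}$ is the binomial coefficient appearing in (\ref{eq:NBinomial}) and matching the exponents of $p$ and $1-p$ to the shape $m$ and the count $k$. Throughout I would assume $m,n$ are positive integers, so that the factorials and the telescoping are valid; this is precisely the regime in which these distributions enter the main text.
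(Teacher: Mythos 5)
Your proposal is correct and follows essentially the same route as the paper: part (\ref{prop:1b}) is verified by exactly the same substitution, merging of exponentials, and regrouping of constants into $\binom{m+k-1}{m-1}p^m(1-p)^k$, while for part (\ref{prop:1a}) your differentiate-and-telescope argument is just the dual of the paper's integration-by-parts-plus-induction identity (which you yourself note as the equivalent alternative). The only incidental point worth flagging is that you implicitly use the standard Gamma density $f_{u,v}(t)=v^u t^{u-1}e^{-vt}/\Gamma(u)$, i.e.\ with $e^{-vt}$ rather than the $e^{-ut}$ printed in Eq.~(\ref{eq:density-Gamma}), which is what the paper's own computation also requires.
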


\begin{proof}
The proof of (\ref{prop:1a}) is based on the following identity, which is proved
by integrating by parts and induction on $n$ (see Chap. 4, exercise 26, p. 200
of Ref.~\onlinecite{Ross}), 
\begin{equation}
\frac{1}{n!} \int_{x}^{\infty}t^{n} \mathrm{e}^{-t} \mathrm{d}t = 
\sum_{k=0}^{n} \mathrm{e}^{-x} \frac{x^k}{k!} .
\end{equation}
Therefore, with the change of variables $u=bt$ we have that 
\begin{equation}
F_{n,b}(x) = 1 -\frac{1}{(n-1)!} 
\int_{bx}^{\infty} u^{n-1} \mathrm{e}^{-u} \mathrm{d}u =
\sum_{k=n}^{\infty}\frac{(bx)^{k}}{k!}
\mathrm{e}^{-bx} = \sum_{k=n}^{\infty} P_{\mathrm{Poisson}}(bx,k).
\end{equation}

To prove (\ref{prop:1b}), let $p=b_2/(b_1+b_2)$. Then, 
\begin{equation}
\int_{0}^{x} f_{m,b_2}(t) P_{\mathrm{Poisson}}(b_1t,k) \mathrm{d}t = 
\frac{b_2^m b_1^k}{(m-1)!k!} 
\int_{0}^{x} t^{m+k-1} \mathrm{e}^{-(b_1+b_2)t} \mathrm{d}t = 
NB_{m,p}(k) F_{m+k,b_1+b_2}(x).
\end{equation}
\end{proof}

Let
\begin{equation}
F^{(n)}_{\theta_{n}}(x):= 
\int_{\Delta_{n}(x)} \, \prod_{i=1}^{n}f_{a_i,b_i}(t_i)\,
\mathrm{d}t_i, 
\end{equation}
where $\Delta_n(x)=\left\{\left(t_1,\ldots,t_n\right)\,:\,0\le t_1\le
t_2\le\cdots\le t_n\le x\right\}$. By the Cavallieri Principle, the functions
$F^{(n)}_{\theta_n}(x)$ satisfy the recurrence relation
\begin{eqnarray}
\label{cavallieri}
F^{(n)}_{\theta_n}(y)=\int_{0}^{y}f_{a_n,b_n}(x)
F^{(n-1)}_{\theta_{n-1}}(x) \mathrm{d}x.
\end{eqnarray}
Also, for this function we have the following properties:

\begin{proposition}
\label{relaciones2} 
If $n\ge 2$, all the $a_i\in\mathbb{N}$, $g_k:=\sum_{i=1}^{k}b_i$, and
$p_k:=b_k/g_k$, then
\begin{eqnarray}
F^{(1)}_{\theta_1}(y) & = & \sum_{k=a_1}^{\infty} P_{{\rm Poisson}}(b_1y,k),
\quad F_{\theta_1} = 1, \label{propiedadbasica1} \\
F^{(n)}_{\theta_n}(x) & = & \sum_{\mathbf{A}^{(n-1)}}
\prod_{j=2}^{n} N B_{a_j,p_j} \big(L_{j-1} \big) F_{ a_n+L_{n-1},g_n}(x),
\label{propiedadbasica2} \\
F_{\theta_n} & = & \sum_{\mathbf{A}^{(n-1)}}
\prod_{j=2}^{n} N B_{a_j,p_j} \left(L_{j-1} \right), 
\label{propiedadbasica3}
\end{eqnarray}
where $L:=\sum_{i=1}^n\ell_i$, for $\ell_i\in\mathbb{N}$, 
$\sum_{\mathbf{A}^{(n-1)}}$ means summation over all 
$\ell_i\in\mathbf{A}^{(n-1)}$, with 
\begin{equation}
\label{cartesianos}
\mathbf{A}^{(n)}:=\{ \ell_1,\ldots,\ell_n:\ell_i\ge a_i,\ell_i\in\mathbb{N},
i=1,\ldots,n\}.
\end{equation}
\end{proposition}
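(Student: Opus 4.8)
The natural strategy is induction on $n$, built on the Cavallieri recurrence (\ref{cavallieri}) together with the two integral identities of Proposition~\ref{relaciones}. For the base case (\ref{propiedadbasica1}) I would simply observe that $F^{(1)}_{\theta_1}(y)=\int_0^y f_{a_1,b_1}(t)\,\mathrm{d}t=F_{a_1,b_1}(y)$, which equals $\sum_{k=a_1}^{\infty}P_{\mathrm{Poisson}}(b_1y,k)$ by (\ref{prop:1a}); letting $y\to\infty$ then gives $F_{\theta_1}=1$ because $f_{a_1,b_1}$ integrates to unity over $[0,\infty)$.

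For the inductive step I would assume the representation (\ref{propiedadbasica2}) at level $n-1$ (the base case (\ref{propiedadbasica1}) supplying the $n=2$ instance) and feed it into
\[
F^{(n)}_{\theta_n}(y)=\int_0^y f_{a_n,b_n}(x)\,F^{(n-1)}_{\theta_{n-1}}(x)\,\mathrm{d}x .
\]
Pulling the finite sum over $\mathbf{A}^{(n-2)}$ and its Negative Binomial weights outside the integral, the entire step reduces to evaluating, for each fixed multi-index, the single integral $\int_0^y f_{a_n,b_n}(x)\,F_{a_{n-1}+L_{n-2},\,g_{n-1}}(x)\,\mathrm{d}x$. Here I would expand the trailing Gamma CDF through (\ref{prop:1a}) as $\sum_{k\ge a_{n-1}+L_{n-2}}P_{\mathrm{Poisson}}(g_{n-1}x,k)$, interchange this nonnegative sum with the integral (justified by Tonelli), and apply (\ref{prop:1b}) with $m=a_n$, $b_2=b_n$, $b_1=g_{n-1}$. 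Since then $b_1+b_2=g_n$ and $p=b_n/g_n=p_n$, each summand collapses to $NB_{a_n,p_n}(k)\,F_{a_n+k,\,g_n}(y)$.

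The crux is the reindexing that absorbs the new variable $k$ into $\mathbf{A}^{(n-1)}$. Setting $\ell_{n-1}:=k-L_{n-2}$ turns the summation range $k\ge a_{n-1}+L_{n-2}$ into exactly $\ell_{n-1}\ge a_{n-1}$, so that $k=L_{n-1}$; the weight $NB_{a_n,p_n}(k)$ becomes the $j=n$ factor $NB_{a_n,p_n}(L_{n-1})$ of the product, and $F_{a_n+k,g_n}(y)$ becomes the required trailing CDF $F_{a_n+L_{n-1},g_n}(y)$. Appending these to the $j=2,\dots,n-1$ weights carried over from the inductive hypothesis reproduces (\ref{propiedadbasica2}) at level $n$. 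I expect this bookkeeping — matching the Poisson lower limit produced by (\ref{prop:1a}) precisely to the constraint $\ell_{n-1}\ge a_{n-1}$ while keeping every Negative Binomial argument $L_{j-1}$ aligned under the shift — to be the main obstacle; the analytic content (Tonelli, plus the two identities of Proposition~\ref{relaciones}) is otherwise routine.

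Finally, (\ref{propiedadbasica3}) would follow from (\ref{propiedadbasica2}) by letting $x\to\infty$: each $F_{a_n+L_{n-1},g_n}(x)\to 1$, so the trailing CDF drops out and only the product of Negative Binomial weights survives, yielding $F_{\theta_n}=\sum_{\mathbf{A}^{(n-1)}}\prod_{j=2}^{n}NB_{a_j,p_j}(L_{j-1})$.
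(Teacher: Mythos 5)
Your proposal is correct and follows essentially the same route as the paper's proof: induction on $n$ via the Cavallieri recurrence~(\ref{cavallieri}), expanding the trailing Gamma CDF by~(\ref{prop:1a}), applying~(\ref{prop:1b}) with $p_n=b_n/(g_{n-1}+b_n)$, and reindexing $\ell_{n-1}=k-L_{n-2}$ using the Cartesian-product structure of $\mathbf{A}^{(n-1)}$, with the limit $x\to\infty$ giving~(\ref{propiedadbasica3}). The only cosmetic difference is that you invoke Tonelli where the paper cites the Monotone Convergence Theorem for the same interchange of sum and integral.
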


\begin{proof}
The first formula in (\ref{propiedadbasica1}), is exactly (\ref{prop:1a}) of
Proposition~\ref{relaciones}. The second is clear since $F^{(1)}_{\theta_1}$ is
a probability distribution function. To prove (\ref{propiedadbasica2}), we
proceed by induction on $n$. For $n=2$, (\ref{propiedadbasica2}) is a direct
consequence of Proposition~\ref{relaciones}, since $a_1\in\mathbb{N}$, and the
Monotone Convergence Theorem to interchange the series and the integral. As
inductive hypothesis, let us assume that
(\ref{propiedadbasica2}) holds for $n-1$, with some $n\ge 3$, and try to prove
it for $n$. With this hypothesis we write Eq.~(\ref{cavallieri}) as
\begin{eqnarray}
F^{(n)}_{\theta_{n}}(y) & = &
\sum_{\mathbf{A}^{(n-2)}} \prod_{j=2}^{n-1} NB_{a_j,p_j} 
\left(L_{j-1}\right) \int_{0 }^{y}f_{a_n,b_n}(x)
F_{a_{n-1}+L_{n-2},g_{n-1}}(x) \mathrm{d} x 
\nonumber \\ & = &
\sum_{\mathbf{A}^{(n-2)}} \prod_{j=2}^{n-1} NB_{a_j,p_j}\left(L_{j-1} \right)
\sum_{k=a_{n-1}+L_{n-2}}^{\infty} NB_{a_{n},p_{n}}\big(k\big)
F_{a_n+k,g_n}
\big(y\big) ,
\end{eqnarray}
where at the last equality we first used (\ref{prop:1a}) and then
(\ref{prop:1b}) with $m=a_n$, $k=L_{n-2}$, $p=p_n$, taking into
account that $p_n=b_n/(b_n+g_{n-1})$. If we make the change of variables
$\ell_{n-1}=k-L_{n-2}$ we write the last equation as
\begin{eqnarray}
F^{(n)}_{\theta_{n}}(y) & = &
\sum_{\mathbf{A}^{(n-2)}} \prod_{j=2}^{n-1} NB_{a_j,p_j}
\left(L_{j-1}\right) 
\sum_{\ell_{n-1}=a_{n-1}}^{\infty} NB_{a_n,p_n} 
\big(\ell_{n-1}+L_{n-2}\big)
F_{a_n+\ell_{n-1}+L_{n-2},g_{n}}\big(y\big) 
\nonumber \\ & = & 
\sum_{\mathbf{A}^{(n-1)}} \prod_{j=2}^{n}
NB_{_j,p_j} \left(L_{j-1}\right) F_{a_n+L_{n-1},g_n}(y) .
\end{eqnarray}
At the last step we used that the sets $\mathbf{A}^{(n)}$ are cartesian products
[see~(\ref{cartesianos})] in fact, 
\begin{equation}
\mathbf{A}^{(n-1)} = \mathbf{A}^{(n-2)}\times\{a_{n-1},1+a_{n-1},
2+a_{n-1},\dots\}
\end{equation}
This finishes the proof of \ref{propiedadbasica2}.

The proof of (\ref{propiedadbasica3}) is based on the Monotone Convergence
Theorem. Indeed, 
\begin{equation}
\label{eq:Ftheta}
F_{\theta_n} = \lim_{x\to\infty} F^{(n)}_{\theta_n}(x) =
\sum_{\mathbf{A}^{(n-1)}} \prod_{j=2}^{n}
NB_{a_j,p_j} \big(L_{j-1}\big) \lim_{
x\to\infty} F_{a_n+L_{n-1},g_n}(x) = 
\sum_{\mathbf{A}^{(n-1)}} \prod_{j=2}^{n} NB_{a_j,p_j}
\big(L_{j-1}\big) \nonumber
\end{equation}
\end{proof}

\begin{remark}
It is not necessary, for Proposition~\ref{relaciones2} to hold, that the
$b_i$'s to belong to $\mathbb{N}$, just that they are positive.
\end{remark}




\begin{thebibliography}{100}

\bibitem{Frahm1}
P. W. Brouwer, K. M. Frahm, and C. W. J. Beenakker,
Phys. Rev. Lett. \textbf{78}, 4737 (1997).

\bibitem{Frahm2}
P. W. Brouwer, K. M. Frahm, and C. W. J. Beenakker,
Waves Random Media \textbf{9}, 91 (1999). 

\bibitem{Smith}
F. T. Smith, 
Phys. Rev. \textbf{118}, 349 (1960).

\bibitem{Wigner}
E. P. Wigner, 
Phys. Rev. \textbf{98} 145 (1955).

\bibitem{Mellobook}
P. A. Mello and N. Kumar, 
\emph{Quantum Transport in Mesoscopic Systems: Complexity and Statistical
Fluctuations} 
(Oxford University Press, New York, 2005). 

\bibitem{Feist2014}
J. Feist, O. Zatsarinny, S. Nagele, R. Pazourek, J. Burgd\"orfer, X. Guan, K.
Bartschat, and B.~I. Schneider, 
Phys. Rev. A \textbf{89}, 033417 (2014).

\bibitem{Ivanov2014}
I.~A. Ivanov and A.~S. Kheifets, 
Phys. Rev. A \textbf{89}, 043405 (2014).

\bibitem{Deshmukh2014}
P.~C. Deshmukh, A. Mandal, S. Saha, A.~S. Kheifets, V.~K. Dolmatov, and
S.~T. Manson, 
Phys. Rev. A \textbf{69}, 053424 (2014).

\bibitem{Chacon2014}
A. Chacon, M. Lein, and C. Ruiz, 
Phys. Rev. A \textbf{89}, 053427 (2014).

\bibitem{Schultze2010}
M. Schultze, M. Fie$\beta$, N. Karpowicz, J. Gagnon, M. Korbman, M. 
Hofstetter, S. Neppl, A.~L. Cavalieri, Y. Komninos, Th. Mercouris, C.~A.
Nicolaides, R. Pazourek, S. Nagele, J. Feist, J. Burgd\"orfer, A.~M. Azzeer,
R. Ernstorfer, R. Kienberger, U. Kleineberg, E. Goulielmakis, F. Krausz, and
V.~S. Yakovlev, 
Science \textbf{328}, 1658 (2010).

\bibitem{MelloLesHouches}
P. A. Mello in {\it Mesoscopic Quantum Physics},
edited by E. Akkermans, G. Montambaux, J.-L. Pichard, and J. Zinn-Justin
(Elsevier, Amsterdam, 1995).

\bibitem{BeenakkerRMP}
C. W. J. Beenakker, 
Rev. Mod. Phys. \textbf{69}, 731 (1997).

\bibitem{Alhassid}
Y. Alhassid, 
Rev. Mod. Phys. \textbf{72}, 895 (2000).

\bibitem{Fyodorov}
Y. V. Fyodorov, D. V. Savin, and H.-J. Sommers, 
J. Phys. A: Math. Gen \textbf{38}, 10731 (2005).

\bibitem{Simons}
B. D. Simons and B. L. Altshuler, 
Phys. Rev. B \textbf{48}, 5422 (͑1993).

\bibitem{Fyodorov1994}
Y. V. Fyodorov, 
Phys. Rev. Lett. \textbf{73}, 2688 (1994).

\bibitem{Taniguchi}
N. Taniguchi, A. Hashimoto, B. D. Simons, and B. L. Altshuler, 
Europhys. Lett. \textbf{27}, 335 (1994).

\bibitem{Mirlin}
Y. V. Fyodorov and A. D. Mirlin, 
Phys. Rev. B \textbf{51}, 13403 (1995).

\bibitem{MarcusB}
C. M. Marcus, A. J. Rimberg, R. M. Westervelt, P. F. Hopkins, and A. C.
Gossard, 
Phys. Rev. Lett. \textbf{69}, 506 (1992). 

\bibitem{Keller}
M. W. Keller, A. Mittal, J. W. Sleight, R. G. Wheeler, D. E. Prober, R. N.
Sacks, H. Shtrikmann, 
Phys. Rev. B \textbf{53}, R1693 (1996).

\bibitem{Chan}
I. H. Chan, R. M. Clarke, C. M. Marcus, K. Campman, A. C. Gossard, 
Phys. Rev. Lett. \textbf{74}, 3876 (1995).

\bibitem{Huibers}
A. G. Huibers, S. R. Patel, C. M. Marcus, P. W. Brouwer, C. I. Duru\"oz, and Jr
J. S. Harris, 
Phys. Rev. Lett \textbf{81}, 1917 (1998).

\bibitem{vanLangen}
P. W. Brouwer, S. A. van Langen, K. M. Frahm, M. B\"uttiker, and
C. W. J. Beenakker,
Phys. Rev. Lett. {\bf 79}, 913 (1997).

\bibitem{Castano}
M. Mart\'inez-Mares and E. Casta\~no, 
Phys. Rev. E \textbf{71}, 036201 (2005).

\bibitem{Switkes}
M. Switkes, C. M. Marcus, K. Campman and A. C. Gossard, 
Science \textbf{283}, 1905 (1999).

\bibitem{BrouwerPumping}
P. W. Brouwer, 
Phys. Rev. B \textbf{58}, R10135 (1998).

\bibitem{Schanze}
H. Schanze, H.-J. St\"ockmann, M. Mart\'inez-Mares, and C. H. Lewenkopf, 
Phys. Rev. E \textbf{71}, 016223 (2005).

\bibitem{MartinezMares}
M. Mart\'inez-Mares, 
Phys. Rev. E. \textbf{72}, 036202 (2005).

\bibitem{Mucciolo}
M. Mart\'inez-Mares, C. H. Lewenkopf, and E. R. Mucciolo, 
Phys. Rev. B \textbf{69}, 085301 (2004).  

\bibitem{Gopar}
V. A. Gopar, P. A. Mello, and M. B\"uttiker,
Phys. Rev. Lett. \textbf{77}, 3005 (1996).

\bibitem{FyodorovPRE97}
Y.~V. Fyodorov, D.~V. Savin, and H.-J. Sommers,
Phys. Rev. E. \textbf{55}, R4857 (1997).

\bibitem{Ossipov1}
A. Ossipov, T. Kottos, and T. Geisel, 
Europhys. Lett. \textbf{62}, 719 (2003).

\bibitem{Ossipov2}
A. Ossipov and Y.~V. Fyodorov, 
Phys. Rev. B \textbf{71}, 125133 (2005).

\bibitem{Kottos}
T. Kottos and M. Weiss, 
Phys. Rev. Lett. \textbf{89}, 056401 (2002).

\bibitem{Antonio}
J.~A. M\'endez-Berm\'udez and T. Kottos, 
Phys. Rev. E \textbf{72}, 064108 (2005).

\bibitem{FyodorovPRL96}
Y. V. Fyodorov and H.-J. Sommers, 
Phys. Rev. Lett. \textbf{76}, 4709 (1996).

\bibitem{FyodorovJMP}
Y. V. Fyodorov and H.-J. Sommers, 
J. Math. Phys. \textbf{38}, 1918 (1997).

\bibitem{Seba}
P. {\v S}eba, K. {\.Z}yczkowski, and J. Zakrzewski, 
Phys. Rev. E \textbf{54}, 2438 (1996).

\bibitem{Texier}
C. Texier and A. Comtet, 
Phys. Rev. Lett. \textbf{82}, 4220 (1999).

\bibitem{Ossipov3}
A. Ossipov, T. Kottos, and T. Geisel, 
Phys. Rev. B \textbf{61}, 11411 (2000).

\bibitem{Savin}
D.~V. Savin, Y.~V. Fyodorov, and H.-J. Sommers, 
Phys. Rev. E \textbf{63}, 035202(R), (2001).

\bibitem{Sommers}
H.-J. Sommers, D.~V. Savin, and V.~V. Sokolov, 
Phys. Rev. Lett. \textbf{87}, 094101 (2001).

\bibitem{Marciani}
M. Marciani, P.~W. Brouwer, and C.~W.~J. Beenakker,
Phys. Rev B \textbf{90}, 045403 (2014).

\bibitem{Berkolaiko}
G. Berkolaiko and J. Kuipers, 
J. Phys. A: Math. Theor. \textbf{43}, 035101 (2010).

\bibitem{Mezzadri1}
F. Mezzadri and N.~J. Simm, 
J. Math. Phys. \textbf{52}, 103511 (2011).

\bibitem{Mezzadri2}
F. Mezzadri and N.~J. Simm, 
J. Math. Phys. \textbf{53}, 053504 (2012).

\bibitem{Dyson}
F. J. Dyson, 
J. Math. Phys. (N.Y.) \textbf{3}, 140 (1962).

\bibitem{Gradshteyn}
I.~S. Gradshteyn and I.~M. Ryzhik, \emph{Table of Integrals, Seriesm and
Products} (Academic Press, Inc., Orlando, 1980), p.~1111.

\bibitem{Birk}
G. Birkhoff and S. Mac Lane, \emph{A survey of Modern Algebra}
(AKP Classics, U.S.A, 1998).

\bibitem{Herstein}
I. N. Herstein, \emph{Topics in Algebra} 
(John Wiley \& Sons, 2nd Edition, New York, 1975).

\bibitem{Ross}
S. Ross, {\em A First Course in Probability} 
(Pearson Prentice Hall, 7th Edition, New Jersey, 2006).

\bibitem{Abramowitz}
\emph{Handbook of Mathematical Functions}, 
ed. by M. Abramowitz and I.~A. Stegun 
(Dover Publications, Inc., New York, 1972).

\bibitem{footnote}
The negative binomial distribution can be used only when $N$ is an even number.

\end{thebibliography}
\end{document}